\numberwithin{equation}{section}
\newtheorem{theorem}{Theorem}[section]
\newtheorem{lemma}[theorem]{Lemma}
\theoremstyle{definition}
\newcommand{\mc}[1]{{\mathcal #1}}
\newcommand{\bb}[1]{{\mathbb #1}}
\newcommand{\mb}[1]{{\mathbf #1}}
\newcommand{\R}{\mathbb R} %%
\newcommand{\Z}{\mathbb Z}%%
\newcommand{\gl}{g_\lambda}
\newcommand{\Gl}{G_\lambda}
\newcommand{\ve}{\varepsilon}
\newcommand{\bx}{\mb x}
\newcommand{\by}{\mb y}
\newcommand{\be}{\mb e}
\newcommand{\bp}{\mb p}
\newcommand{\bq}{\mb q}
\newcommand{\bz}{\mb z}
\begin{document}

\title{Energy Diffusion in Harmonic System with Conservative Noise}
\author{Giada Basile}
\address{Giada Basile\\
 Dipartimento di Matematica\\
 Universit\`{a} di Roma La Sapienza\\
 Roma, Italy.}
 \email{{\tt basile@mat.uniroma.it}
}
\author{Stefano Olla}
\address{Stefano Olla\\
 CEREMADE, UMR-CNRS 7534\\
 Universit\'{e} Paris Dauphine\\
 Paris, France.}
 \email{{\tt olla@ceremade.dauphine.fr}
}

\date{\today.}% eliminate in final version
\begin{abstract}
We prove diffusive behaviour of the energy fluctuations in a system of
harmonic oscillators with a stochastic perturbation of the dynamics
that conserves energy and momentum. The results concern pinned systems
or lattice dimension $d\ge 3$, where the thermal diffusivity is finite.
\end{abstract}
\thanks{Dedicated to Herbert Spohn for his 65th birthday.\\
This paper has been partially supported by the
  European Advanced Grant {\em Macroscopic Laws and Dynamical Systems}
  (MALADY) (ERC AdG 246953)}
\keywords{...}
\subjclass[2000]{...}
\maketitle

\section{Introduction}
\label{sec:introduction}

Lattice networks of oscillators have been considered for a long time as
good models for studying macroscopic energy trasport and its
diffusion, i.e. for obtaining, on a macroscopic space-time scale, heat
equation and Fourier law of conduction (\cite{blr}).   
It is well understood that the diffusive behavior of the energy is due
to the non-linearity of the interactions, and that purely
deterministic harmonic systems have a ballistic transport of energy
(cf. \cite{rll}). 
On the other hand, non-linear dynamics are very difficult to study
and even the 
convergence of the Green-Kubo formula defining the macroscopic thermal
conductivity is an open problem. 
In fact in some cases, like in one dimensional un-pinned
systems, it is expected (and observed numerically) an infinite
conductivity and a superdiffusion of the energy (cf. \cite{llp97,
  sll}). 

In order to model the phonon scattering due to the dynamics,
 various stochastic perturbation of the harmonic
dynamics have been proposed, 
where the added random dynamics conserves the energy.  
In \cite{bo} Fourier Law is proven for an harmonic chain with 
 stocastic dynamics that conserves only energy. In \cite{bborev, bbo2}
 is studied the Green-Kubo formula for a stochastic perturbation that
 conserves energy \emph{and} momentum. It is proven there that
 conductivity is finite for pinned systems, or unpinned if dimension
 is greater than 2. Qualitatively this agrees to what it is expected
 for deterministic anharmonic dynamics. 

In this article we consider the same stochastic dynamics as in
\cite{bborev, bbo2} and we prove that in the cases when conductivity
is finite, the energy behave diffusively in the sense that energy
fluctuations of the system in equilibrium evolve according to a linear
heat equation. 

The key point in proving such diffusive behavior is to obtain a
fluctuation-dissipation decomposition of the microscopic energy
currents $j_{\bx, \by}$ between two adiacent atoms $\bx, \by$. This
means to be able to approximate $j_{\bx, \by}$ by a function of the
form $\kappa (e_\bx - e_\by) + L F$, where $e_\bx$ is the energy of
atom $\bx$, $L$ is the generator of the dynamics and $F$ a function in
its domain, possibly local. With such decomposition it is possible to
close macroscopically the evolution of the energy and $\kappa$ identify
the thermal diffusivity.

In the harmonic chain with noise that conserves only energy, there is
an exact fluctuation-dissipation decomposition with a local function
$F$ (cf. \cite{bo}). For anharmonic system this decomposition is
non-local, and much harder to obtain. In fact there exist only
results with an \emph{elliptic} noise that acts also on the positions
of the atoms (cf. \cite{os}). 

In the model considered here, the noise is of hypoelliptic type,
i.e. acts only on the velocity. But because of the additional
conservation of momentum, the fluctuation-dissipation decomposition of
the currents is non-local. Thanks to the linearity of the interaction,
we can use resolvent approximations instead of local approximations
for this decomposition.  

The corresponding results for the cases when thermal conductivity is
infinite, i.e. the unpinned model in dimension 1 and 2, remains open
problems. In dimension 2 the divergence of the conductivity is of
\emph{logarithmic} type, and we expect a diffusive behavior of the
energy fluctuations under the proper space-time scale. 

In dimension 1, also to establish a conjecture is hard, as the
behavior is really superdiffusive. Under a weak noise limit the local
spectral energy density (Wigner distribution)
behaves following a linear Boltzmann type equation (cf. \cite{bos}). 
 Under proper rescaling of this equation, is proven the convergence to
 a superdiffusive Levy process governed by a fractional laplacian
 (cf. \cite{babo, kjo}. See also \cite{ba} for diffusive behaviour in dimension 2). A
 possible guess is that the fractional laplacian behavior will govern
 energy fluctuations without taking first a weak noise limit. 
For what concerns this problem in one dimensional unpinned anharmonic
chains, see the recent article by Herbert Spohn \cite{Sp13}.

\section{The dynamics}
\label{sec:dynamics}

The Hamiltonian is given by
\begin{equation}
  \label{eq:hamilt}
  \mathcal H = \frac 12 \sum_{\bx}  \left[ {\bp_\bx^2} +
\bq_\bx \cdot (\nu I - \alpha\Delta) \bq_\bx \right] . \nonumber
\end{equation}
The atoms are labeled by $\bx \in \mathbb Z^d$ and $\{\bq_\bx\}$ are the displacements of the atoms 
from their equilibrium positions.
 We denote with $\nabla$, $\nabla^*$ and 
$\Delta = \nabla^*\cdot \nabla$
respectively the discrete gradient, its adjoint and the discrete Laplacian  on
$\mathbb Z^d$. These are defined as
\begin{equation}
  \label{eq:grad}
  \nabla_{\be_j} f(\bx) = f(\bx + \be_j) - f(\bx)
\end{equation}
and
\begin{equation}
  \label{eq:adjgrad}
   \nabla^*_{\be_j} f(\bx) =f(\bx - \be_j)- f(\bx).
\end{equation}

 The parameter $\alpha > 0$ is the strength of the
interparticles springs, and $\nu \ge 0$ is the strength of the pinning
(on-site potential).

We consider the stochastic dynamics corresponding to the Fokker-Planck
equation
\begin{equation}\label{FP}
 \frac {\partial P}{\partial t} = (- A + \gamma S) P =  L P\ .
\end{equation}
where $A$ is the usual Hamiltonian vector field
 \begin{equation*}
  \label{eq:Agen}
  \begin{split}
    A =& \sum_\bx \left\{ \bp_\bx \cdot \partial_{\bq_\bx} +
     [ (\alpha\Delta - \nu I)
    \bq_\bx] \cdot  \partial_{\bp_\bx} \right\}
  % =& \sum_\bx \sum_{j=1}^d  \left\{ p^j_\bx \partial_{q^j_\bx} +
%     (\alpha \Delta - \nu I)
%     q^j_\bx  \partial_{p^j_\bx} \right\}
\end{split}
\end{equation*}
while $S$ is the generator of the stochastic perturbation and 
$\gamma > 0$ is a positive parameter that regulates its strength.
The operator $S$ acts only on the momentums $\{\bp_\bx\}$ and
generates a diffusion on the surface of constant kinetic energy and
constant momentum. This is defined as follows. For every nearest
neighbor atoms $\bx$ and $\bz$, consider the $d-1$ dimensional surface 
of constant kinetic energy and momentum
\begin{equation*}
  \mathbb S_{e,\bp} \; = \; \left\{(\bp_\bx,\bp_\bz)\in \mathbb R^{2d}:
    \frac 12\left(\bp_\bx^2 + \bp_\bz^2\right) = e\; ; \; 
    \bp_\bx + \bp_\bz = \bp  \right\}\ .
\end{equation*}
The following vector fields are tangent to $ \mathbb S_{e,\bp}$
\begin{equation*}
  \label{eq:Xfield}
   X^{i,j}_{\bx, \bz} = (p^j_\bz-p^j_\bx) (\partial_{p^i_\bz}
   - \partial_{p^i_\bx})  -(p^i_\bz-p^i_\bx) (\partial_{p^j_\bz}
   - \partial_{p^j_\bx}) .
\end{equation*}
so $ \sum_{i,j =1}^d (X^{i,j}_{\bx, \bz})^2$ generates a diffusion on
$ \mathbb S_{e,\bp}$. 
In  $d\ge 2$ we define
\begin{equation*}
  \label{eq:Sgen}\begin{array}{ll}
  S & = \displaystyle\frac 1{2(d-1)} \sum_{\bx}\sum_{i,j,k}^d
  \left( X^{i,j}_{\bx, \bx+\be_k}\right)^2 \\
  & = \displaystyle\frac  1{4(d-1)} \sum_{\bx, \bz \in \mathbb Z^d_N \atop \|\bx -
 \bz\|=1}\sum_{i,j}
   \left( X^{i,j}_{\bx, \bz}\right)^2
\end{array}
\end{equation*}
where  ${\be}_1,\ldots,{\be}_d$ is canonical basis of ${\mathbb Z}^d$.

Observe that this noise conserves the total momentum $\sum_\bx
\bp_\bx$ and energy $\mathcal H_N$, i.e.  
\begin{equation*}
  S \; \sum_\bx \bp_\bx = 0\ ,\quad S\; \mathcal H_N = 0
\end{equation*}

In dimension 1, in order to conserve total momentum and total kinetic energy, 
we have to consider a random exchange of momentum between three 
consecutive atoms, and we define
$$
S = \frac 16 \sum_{x\in\mathbb{T}^1_N}(Y_x)^2
$$
where
\begin{equation*}
\label{eq:005}
Y_x=(p_x-p_{x+1})\partial_{p_{x-1}}+(p_{x+1}-p_{x-1})\partial_{p_x} +
(p_{x-1}-p_x)\partial_{p_{x+1}} 
\end{equation*}
which is vector field tangent to the surface of constant energy and
momentum of the three particles involved.
The Fokker-Planck equation (\ref{FP}) gives the time evolution of the
probability distribution $P(\bq, \bp, t)$, given an initial
distribution  $P(\bq, \bp, 0)$. It correspond to the law at time t of the
solution of the following stochastic differential equations:
\begin{equation}
  \label{eq:sde}
  \begin{split}
     d\bq_\bx &= \bp_\bx\; dt\\
     d\bp_\bx &= -(\nu I - \alpha \Delta)\bq_\bx\; dt +
     2 \gamma  \Delta \bp_\bx \; dt \\
     & \qquad \qquad \qquad + \frac {\sqrt{\gamma}}{2\sqrt{ d-1}}
      \sum_{\bz: \|\bz - \bx\|=1} \sum_{i,j=1}^d \left(X^{i,j}_{\bx, \bz}
     \bp_{\bx}  \right) \; dw^{i,j}_{\bx, \bz}(t) 
  \end{split}
\end{equation}
where $\{w^{i,j}_{\bx, \by} = w^{i,j}_{\by, \bx};\; \bx, \by \in \mathbb
Z^d;\; i,j= 1,\dots, d;\; \|\by - \bx\| = 1\}$ are independent standard
Wiener processes.
In $d=1$ the sde are:
\begin{equation}
  \label{eq:sde1}
  \begin{split}
    dp_x = -(\nu I - \alpha \Delta)q_x\; dt + \frac \gamma 6
    \Delta(4p_x + p_{x-1} + p_{x+1}) dt \\
    + \sqrt{\frac \gamma 3}
    \sum_{k=-1,0, 1} \left(Y_{x+k} p_{x} \right) dw_{x+k}(t)
  \end{split}
\end{equation}
where here $\{w_{x}(t), x= 1, \dots, N\}$ are independent standard
Wiener processes.

Defining the energy of the atom $\bx$ as
\begin{equation*}
  \label{eq:energyx}
  e_\bx = \frac 12  \bp_\bx^2 +
\cfrac{\alpha}{4}\sum_{\by: |\by -\bx|=1}
(\bq_{\by} - \bq_\bx)^2 + \frac{\nu}2 \bq_\bx^2\ ,
\end{equation*}
the energy conservation law can be read locally as
\begin{equation*}
   e_\bx(t) -   e_\bx(0) =  \sum_{k=1}^d \left(
J_{\bx-\be_k, \bx}(t) - J_{\bx,\bx +\be_k}(t)\right)
 \end{equation*}
where $J_{\bx,\bx +\be_k}(t)$ is the total energy current
between $\bx$ and $\bx +\be_k$ up to
time $t$. This can be written as
\begin{equation}
  \label{eq:tc}
  J_{\bx, \bx +\be_k}(t) = \int_0^t j_{\bx, \bx +\be_k}(s) \; ds +
  M_{\bx, \bx +\be_k}(t) \ .
\end{equation}
In the above $M_{\bx, \bx +\be_k}(t)$ are  martingales that can be
written explicitly as Ito stochastic integrals
\begin{equation}
  \label{eq:mart}
  M_{\bx, \bx +\be_k}(t) = \sqrt{\frac{\gamma}{(d-1)}} \sum_{i,j}
  \int_0^t \left(X^{i,j}_{\bx, \bx +\be_k} e_\bx\right)(s) \;
  dw^{i,j}_{\bx, \bx +\be_k} (s) 
\end{equation}

The stationary equilibrium probability measures for this dynamics are
given by the corresponding Gibbs measure, defined through the DLR
equations. Because of the linearity of the interaction and the
conservation laws of the stochastic perturbation, these are gaussian
measures on $(\mathbb R^{2d})^{\mathbb Z^d}$ with covariance given by 
\begin{equation}
  \label{eq:8}
  \begin{split}
    <(p_{\bx}^i- v^i) (p_{\by}^j- v^j)> = \beta^{-1}
    \delta_{i,j}\delta_{\bx,\by}, \quad <(p_{\bx}^i- v^i) q_{\by}^j> =
    0,\\
    <q_{\bx}^i q_{\by}^j> =\frac 1d \Gamma(\bx - \by) \delta_{i,j}
  \end{split}
\end{equation}
where $\Gamma (\bx) = (\nu I - \alpha \Delta)^{-1} (\bx)$. In the
pinned case, $\nu =0$, momentum is not conserved  and we have to set
$\mb v=0$. Since in the unpinned case the parameter $\mb v$ represent
a trivial translation invariance, for simplicity we choose $\mb v = 0$.

We consider this dynamics starting with an equilibrium distribution at
a given temperature $\beta^{-1}$. The existence of the infinite
dynamics under this initial distribution can be proven by standard
techniques (for example see \cite{lll}).

Given two continuous functions $F, H$ on $\R^d$ with compact support,
and  $\ve>0$, we look at the evolution of the 
following quantity 
\begin{equation*}\begin{split}
\sigma_{\ve,t}\left(F,H\right) = &\ve^{d}\sum_{\mb x, \mb y}F(\ve
\mb x)H(\ve \mb y)\langle 
\left(e_{\mb x}(\ve^{-2}t) - \beta^{-1}\right)
\big(e_{\mb y}(0)-\beta^{-1}\big)\rangle\\
&=\ve^{d} \sum_{\mb y,\mb z}F\big(\ve (\mb y+\mb z)\big)H(\ve
\mb y)\langle e_{\mb z}(\ve^{-2}t) 
\big(e_{\mb 0}(0)-\beta^{-1}\big)\rangle,
\end{split}\end{equation*}
where $\langle\;\cdot\;\rangle$ is the expectation value with respect
to the equilibrium measure.

\begin{theorem}\label{mainth}
  \begin{equation}
  \lim_{\ve \to 0}  \sigma_{\ve,t}\left(F,H\right) = 
\iint d\mb u\; d\mb v\ F(\mb u)\; G(\mb v) \;
\frac{e^{-|\mb u - \mb v|^2/2t\kappa}}{\left( 2\pi t \kappa \right)^{d/2}}\label{eq:9}
\end{equation}
where the diffusion coefficient $\kappa$ is given by:
\begin{equation}
\kappa=\frac 1 {8\pi^2\gamma}\int_{\bb T^d}d\xi \,
\frac{\left(\partial_1\omega(\xi) \right)^2}{\Phi(\xi)} + \gamma
\end{equation}
Here $\omega$ is the dispersion relation
$$
\omega(\xi)=\big(\nu+4\alpha\sum_{j=1}^d\sin^2(\pi\xi_j)
\big)^{1/2}
$$
and $\Phi$ is the scattering rate
\begin{equation}\label{def:phi}
\Phi(\mb k)=\left\{\begin{array}{ll} \vspace{0.2cm}
8\sum_{j=1}^d \sin^2(\pi k_j) & d\geq 2\\
\frac 4 3 \sin^2(\pi k)\big(1+2\cos^2(\pi k)\big) &d=1
\end{array}\right.
\end{equation}
\end{theorem}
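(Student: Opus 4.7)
The plan is to implement the fluctuation--dissipation strategy described in the introduction, replacing local approximations by spectral (resolvent) ones in the spirit of the Green--Kubo analysis of \cite{bbo2}, and then close a linear heat equation on the macroscopic scale by a standard weak convergence argument. The key analytic input will be a resolvent equation for the generator $L$ on the space of bilinear functionals of $(\bp,\bq)$, together with the uniform $L^2$ bounds that made the thermal conductivity finite in \cite{bbo2}.

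The first step is to pass to the spectral picture: in Fourier variables $\hp(\xi),\hq(\xi)$ with phonon modes $\psi(\xi)=(\omega(\xi)\hq(\xi)+i\hp(\xi))/\sqrt{2\omega(\xi)}$, the Hamiltonian $\mc H$ is diagonal and $L$ acts on bilinears in $(\psi,\bar\psi)$ as the sum of an antisymmetric Hamiltonian rotation at frequencies $\omega(\xi)\pm\omega(\xi')$ and a dissipative piece which, restricted to the diagonal $\xi=\xi'$, reduces to multiplication by the scattering rate $\Phi(\xi)$ of (\ref{def:phi}). The microscopic current $j_{\mb 0,\be_1}$ is itself a bilinear functional whose Fourier kernel is proportional to $\partial_1\omega(\xi)$.

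The core of the argument is then the following fluctuation--dissipation identity: one constructs a quadratic function $F^\lambda$, uniformly bounded in $L^2(\langle\cdot\rangle)$ as $\lambda\downarrow 0$, such that
\begin{equation*}
 j_{\mb 0,\be_1} \;=\; \kappa\,(e_{\mb 0}-e_{\be_1}) + (\lambda - L)F^\lambda + R^\lambda, \qquad \|R^\lambda\|_{L^2}\to 0.
\end{equation*}
This is obtained by inverting $(\lambda - L)F^\lambda = j_{\mb 0,\be_1}$ on the space of two-point kernels: off-diagonal contributions $\omega(\xi)\neq\omega(\xi')$ are absorbed by the fast Hamiltonian rotation, while the diagonal part is inverted by $\Phi(\xi)^{-1}$, integrable precisely in the pinned case or when $d\ge 3$. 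The resulting leading coefficient as $\lambda\to 0$ is the Green--Kubo integral $(8\pi^2\gamma)^{-1}\int_{\T^d}(\partial_1\omega)^2/\Phi\,d\xi$; the additive $\gamma$ in $\kappa$ comes from the quadratic variation of the current martingale (\ref{eq:mart}), or equivalently from the dissipative drift $2\gamma\Delta\bp_\bx$ in (\ref{eq:sde}).

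Inserting this decomposition into the time derivative of $\sigma_{\ve,t}(F,H)$ computed from (\ref{eq:tc}), summation by parts against $F(\ve\bx)$ turns the $\kappa$-piece, under the diffusive rescaling $t\mapsto\ve^{-2}t$, into $\kappa\Delta$ acting on the energy--energy correlation; the $(\lambda-L)F^\lambda$ piece is a total time derivative and contributes only an $O(\ve^2\|F^\lambda\|_{L^2})$ boundary term, while $R^\lambda$ is $L^2$-negligible. Taking the joint limit $\ve\to 0$, $\lambda\to 0$ closes a linear heat equation with diffusivity $\kappa$ for the limit fluctuation field; combined with the initial covariance read off from (\ref{eq:8}), uniqueness yields the Gaussian kernel (\ref{eq:9}). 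The principal obstacle is the uniform $L^2$ control on $F^\lambda$ as $\lambda\downarrow 0$: it requires balancing the fast Hamiltonian oscillations of $A$ against the small hypoelliptic dissipation from $S$, and the integrability of $\Phi^{-1}(\partial_1\omega)^2$ is exactly the threshold at which the method works, which is why the unpinned $d=1,2$ cases remain open.
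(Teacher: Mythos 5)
Your proposal has the right overall philosophy --- a resolvent-type fluctuation--dissipation decomposition of the Hamiltonian current, an additive $\gamma$ coming from the noise, and a closure of the heat equation by summation by parts --- and in that sense it matches the paper. But two steps as you state them would fail, and they are precisely where the real work lies.

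First, the claim that one can write $j^a_{\mb 0,\be_1}=\kappa(e_{\mb 0}-e_{\be_1})+(\lambda-L)F^\lambda+R^\lambda$ with $\|R^\lambda\|_{L^2(\langle\cdot\rangle)}\to 0$ is too strong. The dissipative part of $L$ restricted to bilinear functionals is not multiplication by $\Phi(\xi)$: it is a collision operator with a gain term, and after the resolvent step the leftover consists of off-diagonal momentum correlations $\bp_{\bz}\cdot\bp_{\bz+\bx}$ with $\bx\neq\mb 0$ (weighted by the exponentially decaying kernel $\kappa^{j,\ell}_\lambda$), whose \emph{static} $L^2$ norm is of order one and does not vanish with $\lambda$. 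These terms are killed only in the space-time averaged sense, and this requires the separate Boltzmann--Gibbs principle of Section 4: replacement of $\bp_\bz\cdot\bp_{\bz+\bx}$ by block averages, the identity $\widetilde\Psi_{\ell,\bx}=S_{\Lambda_{\ell+|\bx|}}U_{\ell,\bx}$, and the $K^{-2}$ spectral gap of $S$ on cubes. Your proposal has no substitute for this step, and without it the argument does not close. Relatedly, the paper does not invert the full generator $L$ at all: it inverts only $S$ on linear functions of $\bp$ (where $S\bp_\bx=2\Delta\bp_\bx$, so the resolvent $g^i_\lambda$ is explicit with symbol $2i\sin(2\pi k_i)/(\Phi+\lambda)$), and then processes $Au^i_\lambda$ with a second explicit identity based on $A[\bp_{\mb 0}\cdot\bq_\bx+\bq_{\mb 0}\cdot\bp_\bx]$; this is what makes the scheme tractable at fixed $\gamma$, whereas your "off-diagonal absorbed by fast rotation" picture is a weak-noise heuristic.

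Second, the term $\lambda F^\lambda$ is not a boundary term. Only $LF^\lambda$ integrates to a telescoping difference plus a martingale; the quantity $\ve^{-1}\lambda\int_0^{t}\!ds\,(\cdots)(s/\ve^2)$ applied to an $L^2$-bounded observable is \emph{not} small as $\ve\to 0$ at fixed $\lambda$ (crude Cauchy--Schwarz gives $O(\lambda/\ve)$). One needs the $H_{-1}$ variational estimate of Lemma \ref{t:ula2}, $\sup_f\{2\lambda\ve^{-1}\ve^{d/2}\sum F\langle f\,\tau u^i_\lambda\rangle-\ve^{-2}\langle f(-Sf)\rangle\}$, which produces the bound $\lambda^2\int_{\T^d}\hat\Gamma(\mb k)\,(\lambda+\Phi(\mb k))^{-2}d\mb k\to 0$ (finite exactly when $\nu>0$ or $d\ge 3$), uniformly in $\ve$. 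A minor further point: the additive $\gamma$ in $\kappa$ enters through the instantaneous noise current $j^s=-\nabla\bp^2$ together with the replacement lemma $\bp^2_\bz\rightsquigarrow e_\bz$, not through the quadratic variation of the martingale $M_{\bx,\bx+\be_k}$, which does not contribute to $\sigma_{\ve,t}$ at this order.
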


With a little more work one can prove that the fluctuation field
\begin{equation}
  \label{eq:10}
  Y^\ve_t(F) = \ve^{d/2} \sum_{\mb y}F\big(\ve \mb y\big) 
  \left[e_{\mb z}(\ve^{-2}t) -\beta^{-1}\right]
\end{equation}
converges in law to the infinite dimensional Ornstein Uhlenbeck $Y_t$
solution of the linear stochastic PDE:
\begin{equation}
  \label{eq:15}
  \partial_t Y = \frac{\kappa}2 \Delta Y + \beta^{-1/2} \nabla W
\end{equation}
where $W(\bx, t)$ is the standard space time white noise on
$\R^{d+1}$. This extension is standard and we will expose here only
the proof of Theorem  \ref{mainth}.

\section{Energy currents}
Let $e_{\bf x}$ be the energy of the atom $\bf x$, which is equal to
\begin{equation*}
e_{\bf x}=\frac 1 2 \mb p _{\bf x} ^2-\frac 1 2 \mb q_{\mb x}\cdot\big(\alpha\Delta -\nu\big)
\bf q_{\bf x}.
\end{equation*}
When  $\nu=0$ there is no pinning. We consider cases  $\nu>0$, $d\geq 1$ and $\nu=0$, $d=3$.
The  instantaneous energy currents $j_{\mb x, \mb x+\mb e_i}$, $i=1,..,d$, satisfy the equation
\begin{equation*}
L e_{\bf x}=\sum_{i=1}^d\big(j_{\mb x-\mb e_i, \bf x }-j_{\mb x, \mb x+\mb e_i } \big),
\end{equation*}
and  can be written as
\begin{equation}\label{def:j}\begin{split}
j_{\mb x, \mb x+\mb e_i }=j^a_{\mb x, \mb x+\mb e_i }+ \gamma j^s_{\mb x, \mb x+\mb e_i }.
\end{split}\end{equation}
The first term is the Hamiltonian contribution to the energy current, namely
\begin{equation}\label{def:ja}
j^a_{\mb x, \mb x+\mb e_i }=-\frac{\alpha} 2 
\big(\mb q_{\mb x+\mb e_i}-\mb q_{\mb x}\big)
\cdot\big(\mb p_{\mb x+\mb e_i}+\mb p_{\mb x}\big),
\end{equation}
while the noise contribution in $d\geq 2$ is
\begin{equation}\label{def:js}
\gamma j^s_{\mb x, \mb x+\mb e_i }=-\gamma \nabla_{\mb e _i}\mb p^2_{\mb x}.
\end{equation}
In one dimension
\begin{equation}\label{def:js1}\begin{split}
\gamma j^s_{x,x+1}= & -\gamma \nabla\phi(p_{x-1},p_x p_{x+1}),\\
\phi(p_{x-1},p_x p_{x+1})= & \frac 1 6 \big[p_{x+1}^2 +4p_x^2
    +p_{x-1}^2 +p_{x+1}p_{x-1}\\
&-2p_{x+1}p_{x}-2p_{x-1}p_x\big].
\end{split}\end{equation}
We  denote with $\phi_x:=\phi(p_{x-1},p_x,p_{x+1})$.

Given $F, H\in C_c^2(\R^d)$ (twice differentiable functions with
compact support), and  $\ve>0$, we look at the evolution of the
following quantity 
\begin{equation*}\begin{split}
\sigma_{\ve,t}\left(F,H\right) = &\ve^{d}\sum_{\mb x, \mb y}F(\ve
\mb x)H(\ve \mb y)\langle 
\left(e_{\mb x}(\ve^{-2}t) - \beta^{-1}\right)
\big(e_{\mb y}(0)-\beta^{-1}\big)\rangle\\
&=\ve^{d} \sum_{\mb y,\mb z}F\big(\ve (\mb y+\mb z)\big)H(\ve
\mb y)\langle e_{\mb z}(\ve^{-2}t) 
\big(e_{\mb 0}(0)-\beta^{-1}\big)\rangle,
\end{split}\end{equation*}
where $\langle\;\cdot\;\rangle$ is the expectation value with respect to the equilibrium measure.
For $d\geq 1$ we have
\begin{equation}\label{main}
\begin{split}
& \sigma_{\ve,t}\left(F,H\right) = \sigma_{\ve,0}\left(F,H\right)\\
&+\ve^d \sum_{\mb y, \mb z}\sum_{i=1}^d
\nabla^\ve_{\mb e_i}F\big(\ve (\mb y+\mb z)\big)H(\ve \mb y)\frac 1 {\ve}
\int_0^{t} ds\;\left< j_{\mb z,\mb z+\mb e_i}(s/\ve^2)\big(e_{\mb
    0}(0)-\beta^{-1}\big)\right>.
\end{split}\end{equation}
Accordingly to \eqref{def:j}, we decompose the energy current $j_{\mb z,\mb z+\mb e_i}$
in two parts and we treat separately the two integrals.
%%%%%%%%%%%%%%%%%%%%%%%%%%%%%%%%%%%%%%%%%%%%%%%%%%%%%%%%%%%%%%%%%%%%%%%%%%%%%%%%%%%%%%%%%%%%
\subsection*{Noise current}
For $d\geq 2$, using \eqref{def:js} we have
\begin{equation*}\begin{split}
&\gamma \ve^d \sum_{\mb y, \mb z}\sum_{i=1}^d
\nabla^\ve_{\mb e_i}F\big(\ve (\mb y+\mb z)\big)H(\ve \mb y)\frac 1 {\ve}
\int_0^{t} ds\;\langle j^s_{\mb z,\mb z+\mb e_i}(s/\ve^2)\big(e_{\mb 0}(0)-\beta^{-1}\big)\\
&=\gamma\;\ve^d \sum_{\mb z, \mb y}\Delta  F(\ve (\mb y+\mb z))H(\ve \mb y)\; 
  \int_0^{t} ds\;\langle \mb p^2_{\mb z}(s/\ve^2) \;\big(e_{\mb 0}(0)-\beta^{-1}\big)\rangle
+\mathcal O (\ve).
\end{split}\end{equation*}
In order to replace in the last expression $p^2_{\mb z}$ with $e_{\mb
  z}$, we use the following Lemma:
\begin{lemma}
For every $G\in L^2(\R^d)$, $\forall d\geq 1$
\begin{equation*}
\lim_{\ve\to 0}\sup_{t\in [0,T]}\left<\left(\int_0^t ds\,
 \ve^{d/2} \sum_{\mb y} G(\ve\mb y)\big[\mb p^2_{\mb y}- e_{\mb y}
\big](s/\ve^2)\right)^2\right> = 0.
\end{equation*}
\end{lemma}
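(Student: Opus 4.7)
The plan is to express $\bp_\by^2 - e_\by$ as $A$ applied to a local observable, apply It\^o's formula, and reduce the proof to a variance estimate for the noise residual.

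Starting from $e_\by = \tfrac12\bp_\by^2 - \tfrac12\bq_\by\cdot(\alpha\Delta - \nu I)\bq_\by$ together with $A\bq_\by=\bp_\by$ and $A\bp_\by=(\alpha\Delta-\nu I)\bq_\by$, a direct computation gives the pointwise identity
\[
\bp_\by^2 - e_\by \;=\; \tfrac12 A(\bq_\by\cdot\bp_\by) \;=\; \tfrac12 L(\bq_\by\cdot\bp_\by) - \tfrac{\gamma}{2}\, S(\bq_\by\cdot\bp_\by),
\]
since $L=A+\gamma S$. Applying It\^o's formula to $\bq_\by\cdot\bp_\by$ and rescaling $s\mapsto s/\ve^2$ yields
\[
\int_0^t[\bp_\by^2-e_\by](s/\ve^2)\,ds \;=\; \tfrac{\ve^2}{2}\big\{[\bq_\by\cdot\bp_\by](t/\ve^2)-[\bq_\by\cdot\bp_\by](0)-\mathcal M^{(\by)}_{t/\ve^2}\big\}\; -\; \tfrac{\gamma}{2}\int_0^t S(\bq_\by\cdot\bp_\by)(s/\ve^2)\,ds,
\]
where $\mathcal M^{(\by)}$ is the associated It\^o martingale. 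For $d\ge 2$ one checks from the SDE that $S\bp_\by=2\Delta\bp_\by$, hence $S(\bq_\by\cdot\bp_\by)=2\bq_\by\cdot\Delta\bp_\by$; an analogous local expression holds in $d=1$ from the three-point operator $Y_x$.

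Multiplying by $\ve^{d/2}G(\ve\by)$, summing in $\by$, and computing second moments, the two boundary pieces have variance $O\big(\ve^{d+4}\sum_{\by,\bz}|G(\ve\by)G(\ve\bz)|\,|\langle(\bq_\by\cdot\bp_\by)(\bq_\bz\cdot\bp_\bz)\rangle|\big) = O(\ve^4)$, thanks to the Gaussian factorisation $\langle(\bq_\by\cdot\bp_\by)(\bq_\bz\cdot\bp_\bz)\rangle = \beta^{-1}\Gamma(0)\delta_{\by\bz}$ of equilibrium covariances and the $\ell^1$-summability of $\Gamma$ that holds under the hypotheses ($\nu>0$ in any $d$, or $\nu=0$ and $d\ge 3$). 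The martingale piece is controlled via its predictable quadratic variation, which is $O(T)$ per site with only finite-range cross-correlations (the noise is nearest-neighbour), yielding variance $O(t\,\ve^2)$. Both terms therefore vanish as $\ve\to 0$.

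The main obstacle is the noise residual
\[
\mathcal R_\ve(t)\;:=\; \gamma\,\ve^{d/2}\sum_\by G(\ve\by)\int_0^t\bq_\by\cdot\Delta\bp_\by(s/\ve^2)\,ds.
\]
After the change of variables $s=\ve^2 u$, its variance becomes
\[
\gamma^2\ve^{d+4}\sum_{\by,\bz}G(\ve\by)G(\ve\bz)\int_0^{t/\ve^2}\!\!\int_0^{t/\ve^2} C_{\by,\bz}(u_1-u_2)\,du_1\,du_2,
\]
with $C_{\by,\bz}(u):=\langle[\bq_\by\cdot\Delta\bp_\by](u)[\bq_\bz\cdot\Delta\bp_\bz](0)\rangle$. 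By translation invariance $C_{\by,\bz}$ depends only on $\by-\bz$, and the elementary bound $\int_0^T\!\!\int_0^T|C(u_1-u_2)|\,du_1du_2\le 2T\int_0^\infty|C(u)|\,du$ together with Young's inequality in the spatial sum reduces the problem to establishing
\[
\sum_{\mb w}\int_0^\infty|C_{\mb w}(u)|\,du \;<\;\infty,\qquad C_{\mb w}(u):=C_{\mb 0,\mb w}(u).
\]
Granting this, $\langle\mathcal R_\ve(t)^2\rangle = O(t\,\ve^2)\to 0$. The space-time summability of $C_{\mb w}(u)$ is the central technical input, and is the hard part of the argument: because the dynamics is harmonic, $C_{\mb w}(u)$ admits an explicit Fourier representation through the linear evolution of the Gaussian $(\bq,\bp)$-covariance, and the required integrability is of the same nature as (and can be deduced along the lines of) the finite Green--Kubo estimate established in \cite{bborev, bbo2} precisely under the hypotheses of this paper. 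Once this spectral bound is in hand, the remaining ingredients above are routine It\^o and Gaussian second-moment computations.
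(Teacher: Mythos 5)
Your opening identity $\bp_\by^2-e_\by=\tfrac12 L(\bq_\by\cdot\bp_\by)-\tfrac{\gamma}{2}S(\bq_\by\cdot\bp_\by)$ is exactly the decomposition the paper uses, and your treatment of the $L$-part (Itô, boundary terms of variance $O(\ve^4)$ using $\langle\bq_0^2\rangle<\infty$ for $\nu>0$ or $d\ge3$, martingale of quadratic variation $O(t\ve^2)$) matches the paper's argument. The divergence, and the gap, is in the noise residual.

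For the term $\int_0^t S(\bq_\by\cdot\bp_\by)(s/\ve^2)\,ds$ you reduce everything to the claim
\[
\sum_{\mb w}\int_0^\infty\bigl|C_{\mb w}(u)\bigr|\,du<\infty,\qquad
C_{\mb w}(u)=\bigl\langle[\bq_{\mb 0}\cdot\Delta\bp_{\mb 0}](u)\,[\bq_{\mb w}\cdot\Delta\bp_{\mb w}](0)\bigr\rangle,
\]
which you yourself identify as ``the hard part'' and then do not prove, deferring to the Green--Kubo estimates of \cite{bborev,bbo2}. This is not a legitimate deferral: those references establish convergence of the Green--Kubo formula for the \emph{current--current} correlation, a different observable, and they do so by resolvent/spectral methods rather than by pointwise-in-time decay of correlations; absolute space--time integrability of $C_{\mb w}(u)$ is a strictly stronger statement that does not follow from anything cited, and proving it would require a separate Fourier computation at least as long as the lemma itself. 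The paper avoids this entirely: because the residual is manifestly of the form $S\Phi$ with $\Phi=\bq_\by\cdot\bp_\by$, it applies the variational $H_{-1}$ bound (Lemma 2.4 of \cite{klo}, worked out in Appendix A), which gives
\[
\Bigl\langle\Bigl(\int_0^t ds\,\ve^{d/2}\sum_{\mb y}G(\ve\mb y)\,S[\bq_{\mb y}\cdot\bp_{\mb y}](s/\ve^2)\Bigr)^2\Bigr\rangle
\le 24\,T\,\ve^{2}\,\bigl\langle \Phi_\ve(-S)\Phi_\ve\bigr\rangle
\]
with $\Phi_\ve=\ve^{d/2}\sum_{\mb y}G(\ve\mb y)\tau_{\mb y}\Phi$, and the Dirichlet form on the right is an explicit, finite static Gaussian expectation. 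That single inequality replaces your entire correlation-decay analysis and is the idea your proposal is missing. Until you either prove the summability claim or substitute the variational bound, the proof is incomplete at its central step.
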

\begin{proof}
Observe that
\begin{equation*}\begin{split}
\mb p_{\mb x}^2 - e_{\mb x}& = \frac 12 \mb p_{\mb x}^2 + \frac 12 \mb
q_{\mb x}\cdot[\alpha\Delta-\nu I]\mb q_{\mb x} 
= \frac 12 L[\mb q_{\mb x}\cdot\mb p_{\mb x}] -\frac 12 \gamma S[\mb
q_{\mb x}\cdot \mb p_{\mb x}]
\end{split}\end{equation*}
Therefore
\begin{equation}\label{eq:Ls}
\begin{split}
&\Big<\Big(\int_0^t ds\,
\ve^{d/2} \sum_{\mb y} G(\ve\mb y)\big[\mb p^2_{\mb y}- e_{\mb y} \big](s/\ve^2)\Big)^2\Big>\\
&\le 
\ve^4 \Big<\Big(\int_0^t ds\,
\frac{\ve^{d/2}}{2} \sum_{\mb y} G(\ve\mb y) \ve^{-2} L [\mb q_{\mb y}\cdot\mb
p_{\mb y}] (s/\ve^2)\Big)^2\Big>\\
& + 
\gamma^2 \Big<\Big(\int_0^t ds\,
\frac{\ve^{d/2}}{2} \sum_{\mb y} G(\ve\mb y)S[\mb q_{\mb y}\cdot\mb p_{\mb y}](s/\ve^2)\Big)^2\Big>.
\end{split}
\end{equation}
In the first term of the rhs of the above inequality, we can perform explicitely
the time integration and we have
\begin{equation*}
  \begin{split}
    \frac{\ve^{d/2}}{2} \sum_{\mb y} G(\ve\mb y) \int_0^t ds\,
    \ve^{-2} L [\mb q_{\mb y}\cdot\mb p_{\mb y}] (s/\ve^2) \\
    = \frac{\ve^{d/2}}{2} \sum_{\mb y} G(\ve\mb y)
    \left( [\mb q_{\mb y}\cdot\mb p_{\mb y}] (t/\ve^2) - [\mb q_{\mb
        y}\cdot\mb p_{\mb y}] (0)\right) + \mc M_\ve(G,t)
  \end{split}
\end{equation*}
where $\mc M_\ve(G,t)$ is a martingale (given by a stochastic
integral) whose quadratic variation is bounded by
\begin{equation*}\begin{split}
  \left[ \mc M_\ve(G,t)^2\right] &\le \ve \frac{\ve^{d}}{4} \sum_{\mb
    y} \left|\nabla^\ve G(\ve\mb y)\right|^2 \int_0^t \langle\mb p_{\mb y}^2 (s/\ve^2)
  \mb q_{\mb y}^2 (s/\ve^2)\rangle ds  \\
& \le \ve\, C\|G\|^2 \beta^{-1}\langle \mb q_{\mb
  0}^2\rangle.
\end{split}\end{equation*}
Since 
$\langle \mb q_{\mb 0}^2\rangle$ is bounded in the pinned case and for
$d\geq 3$, we conclude that the first term on the RHS of \eqref{eq:Ls}
converges to $0$ as $\ve \to 0$. 

Moreover, by the estimate in appendix A
\begin{equation*}\begin{split}
&\left<\left(\int_0^t ds\,
\ve^{d/2} \sum_{\mb y} G(\ve\mb y)S[\mb q_{\mb y}\cdot\mb p_{\mb
  y}](s/\ve^2)\right)^2\right> \le Ct \ve^2
\end{split}
\end{equation*}
which vanishes as $\ve\to 0$.

\end{proof}

Using Cauchy-Schwartz, the quantity
\begin{equation*}\begin{split}
\gamma\;\ve^d \sum_{\mb z, \mb y}\Delta  F(\ve (\mb y+\mb z))H(\ve \mb y)\; 
  \int_0^{t} ds\;\langle \big[\mb p^2_{\mb z}-e_{\mb z}\big](s/\ve^2) \;\big(e_{\mb 0}(0)-\beta^{-1}\big)\rangle\\
=\gamma\;\ve^d \sum_{\mb y', \mb y}\Delta  F(\ve \mb y')H(\ve \mb y)\; 
  \int_0^{t} ds\;\langle \big[\mb p^2_{\mb y'}-e_{\mb y'}\big](s/\ve^2) \;\big(e_{\mb y}(0)-\beta^{-1}\big)\rangle
\end{split}\end{equation*}
is bounded in absolute value by
$$
\gamma \|H\|\langle \big(e_{\mb 0}-\beta^{-1}\big)^2\rangle^{1/2}
\left\langle\left(\int_0^t ds\,
\ve^{d/2} \sum_{\mb y} \Delta F(\ve\mb y)\big[\mb p^2_{\mb y}- e_{\mb y} \big](s/\ve^2)\right)^2\right\rangle^{1/2},
$$
which vanishes as $\ve\to 0$ in view of the previous lemma. Then the contribution of the noise
in the evolution of the energy fluctuations is given by
\begin{equation}\label{noise}\begin{split}
&\gamma \ve^d\sum_{\mb y, \mb z}\sum_{i=1}^d
\nabla^\ve_{\mb e_i}F\big(\ve (\mb y+\mb z)\big)H(\ve \mb y)\frac 1 {\ve}
\int_0^{t} ds\;\langle j^s_{\mb z,\mb z+\mb e_i}(s/\ve^2)\big(e_{\mb 0}(0)-\beta^{-1}\big)\rangle\\
&=\gamma\;\ve^d \sum_{\mb z, \mb y}\Delta  F(\ve (\mb y+\mb z))H(\ve \mb y)\; 
  \int_0^{t} ds\;\langle  e_{\mb z}(s/\ve^2) \;\big(e_{\mb 0}(0)-\beta^{-1}\big)\rangle
+\mathcal O (\ve)\\
&= \sigma_{\ve,t}(\gamma \Delta F, G) +\mathcal O (\ve).
\end{split}\end{equation}

In $d=1$ we have
\begin{equation*}\begin{split}
&\gamma \ve\sum_{\mb y, \mb z}
\nabla^\ve F\big(\ve (y+z)\big)H(y)\frac 1 {\ve}
\int_0^{t} ds\;\langle j^s_{z,z+1}(s/\ve^2)\big(e_{0}(0)-\beta^{-1}\big)\rangle\\
&=\gamma\;\ve\sum_{y}\sum_{z}F''(\ve (y+z))H(\ve y)\; 
  \int_0^{t} ds\;\langle \phi_z(s/\ve^2) \;\big(e_0(0)-\beta^{-1}\big)\rangle+\mc O(\ve),\\
\end{split}\end{equation*}
with $\phi$ defined in \eqref{def:js1}. The last quantity can be written as
\begin{equation*}\begin{split}
&=\gamma\;\ve\sum_{y}\sum_{z} F''(\ve (y+z)) H(\ve y)\; 
\int_0^t ds\;\Big\{\langle p_z^2(s/\ve^2) \;\big( e_0(0)-\beta^{-1}\big)\rangle\\
&+
\frac 1 6\langle \big(p_{z+1}p_{z-1}-2p_{z}p_{z+1}-2p_zp_{z-1}\big)(s/\ve^2)\big(e_0(0)-\beta^{-1}\big)\rangle\Big\}+\mathcal O (\ve),
\end{split}\end{equation*}
where we can replace $p_z^2$ with $e_z$ using same arguments as  in $d\geq 2$. 
We will prove in Section $\ref{sec:BG}$ that 
the second integral vanishes as $\ve\to 0$,  then formula \eqref{noise} holds in all dimensions.

%%%%%%%%%%%%%%%%%%%%%%%%%%%%%%%%%%%%%%%%%%%%%%%%%%%%%%%%%%%%%%%%%%%%%%%%%%%%%%%%%%%%
\subsection*{Hamiltonian currents.}

Notice that
\begin{equation}
  \label{eq:3}
   j^a_{0,\mb e_i}= \frac \alpha 2 \left(\mb p_{\mb e_i} - \mb p_{-\mb e_i}\right)\cdot \mb
   q_{\mb e_i} - \frac\alpha 2 \nabla_i \left(\mb p_{0} \cdot \mb q_0 +
    \mb p_{-\mb e_i}\cdot  \mb q_0 \right)
\end{equation}
it is easy to see that the contribution of the gradient terms in
\eqref{main} vanish as $\ve\to 0$, so we have only to consider the
first term.

Let $g_\lambda^j $, $j=1,..,d$ be the solution of the equation
\begin{equation}\label{def:Gl}
\big(\lambda -2\Delta \big) g_\lambda^j (\mb z)=  
\delta(\mb z+ \mb e_j)  - \delta(\mb z- \mb e_j),\qquad d\geq 2,
\end{equation}
with $\lambda>0$ . In $d=1$ the equation reads
\begin{equation}\label{def:Gl1}
 -\frac{1}{3}\Delta\big[ 4g_\lambda(z)+g_\lambda(z+1)+g_\lambda
 (z-1)\big]+ \lambda g_\lambda(z)=
\delta(z+1) - \delta(z-1).
\end{equation}
We observe that $\gl^j$ decays exponentially  fast, and in particular the $\ell^2$-norm is finite.
We by $\hat g_\lambda^j$ its Fourier transform, namely
\begin{equation}\label{hatG}
\hat g _\lambda^j (\mb k) =\frac {2i\sin(2\pi k_j)} {\Phi(\mb k)+\lambda},
\end{equation} 
where $\Phi$ is defined by \eqref{def:phi}.
We define
\begin{equation}\label{def:u}
u^i_\lambda=
\sum_{\mb x} g_\lambda^i  ({\mb x}) \; \mb q_0\cdot \mb p_{\mb{x}},\qquad
i=1,..,d. 
\end{equation}
so that 
\begin{equation*}\begin{split}
\alpha\left(\mb p_{\mb e_i} - \mb p_{-\mb e_i}\right)\cdot \mb
   q_{\mb e_i} = {\alpha} \big(\lambda - S\big)u^i_\lambda
={\alpha} \lambda u^i_\lambda - \frac \alpha{\gamma} L u^i_\lambda
 + \frac{\alpha}{\gamma} A u^i_\lambda.
\end{split}\end{equation*}
Now we show that
%by choosing $\lambda = \ve^2$, 
the contribution of the first two terms of the above
to \eqref{main} will vanish as $\ve\to 0$, $\lambda\to 0$. We will use the following Lemma.

\begin{lemma}\label{t:ula} For every $F\in L^2(\bb R^d)$, $\forall i=1,..,d$
  \begin{equation}\label{eq:4}
    \Big\langle\Big( \ve^{d/2} \sum_{\bx} F(\ve \bx) \tau_{\bx}
      u^i_\lambda \Big)^2 \Big\rangle \le \beta^{-1} \|F\|^2
\int d\mb k\; 
  \hat\Gamma(\mb k)\;  \left|\hat g_\lambda^i(\mb k)\right|^2.
  \end{equation}
\end{lemma}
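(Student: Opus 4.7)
\emph{Proof plan.} The strategy is a direct Gaussian calculation: expand the square, apply Wick's theorem using the covariances in \eqref{eq:8}, and then use Parseval on $\Z^d$ to recognise $\hat g_\lambda^i$ and $\hat\Gamma$.

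Writing $\tau_\bx u^i_\lambda = \sum_\by g_\lambda^i(\by)\, \bq_\bx \cdot \bp_{\bx+\by}$, the square expands into a four-fold sum indexed by $\bx_1,\bx_2,\by_1,\by_2$. The key simplification is that \eqref{eq:8} forces $\langle p q \rangle = 0$, so only the Wick pairing $(qq)(pp)$ survives; moreover the $d$-fold contraction of vector indices $\sum_{a,b}\delta_{ab}\delta_{ab} = d$ cancels the $1/d$ appearing in $\langle q^a_\bx q^b_\by\rangle$, leaving
\begin{equation*}
\big\langle (\bq_{\bx_1}\cdot \bp_{\bx_1+\by_1})(\bq_{\bx_2}\cdot \bp_{\bx_2+\by_2})\big\rangle \;=\; \beta^{-1}\, \Gamma(\bx_1-\bx_2)\, \delta_{\bx_1+\by_1,\, \bx_2+\by_2}.
\end{equation*}

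Setting $\mb w = \bx_1 - \bx_2$, the $\by$-sums then collapse into the autocorrelation $h(\mb w) := \sum_\bz g_\lambda^i(\bz)\, g_\lambda^i(\bz+\mb w)$, whose discrete Fourier transform is $\hat h(\mb k) = |\hat g_\lambda^i(\mb k)|^2$ (recall $g_\lambda^i$ is real, $\hat g_\lambda^i$ purely imaginary, by \eqref{hatG}). A translation-invariant Cauchy--Schwarz gives $\big|\sum_{\bx_2} F(\ve(\bx_2+\mb w))\, F(\ve \bx_2)\big| \le \sum_\bx F(\ve \bx)^2$ uniformly in $\mb w$, factoring out $\ve^d \sum_\bx F(\ve \bx)^2 \le \|F\|^2$ (interpreting this as the discrete $\ell^2$ norm, with equality in the Riemann-sum limit, and extending from smooth $F$ by density). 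The remaining sum $\sum_{\mb w}\Gamma(\mb w)\, h(\mb w)$ is converted by Parseval into $\int_{\T^d}\hat\Gamma(\mb k)\,|\hat g_\lambda^i(\mb k)|^2\, d\mb k$, which yields \eqref{eq:4}.

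The only subtlety is the vector-index bookkeeping in the Wick step, where the $1/d$ from $\langle qq\rangle$ must cancel exactly against the $d$-fold inner product sum; everything else is routine. Finiteness of the bound is automatic under the assumptions of the section: $\hat\Gamma(\mb k) = 1/\omega(\mb k)^2$ has an integrable singularity at the origin precisely in the pinned case or for $d \ge 3$, and $|\hat g_\lambda^i(\mb k)|^2$ is bounded uniformly thanks to the $\lambda > 0$ regularisation in \eqref{hatG}.
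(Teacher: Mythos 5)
Your proposal is correct and follows essentially the same route as the paper: both reduce the variance to the translation--invariant covariance $\langle u^i_\lambda\,\tau_{\bx} u^i_\lambda\rangle=\beta^{-1}\Gamma(\bx)\sum_{\bz}g^i_\lambda(\bz)g^i_\lambda(\bz+\bx)$ via the Gaussian (Wick) structure of \eqref{eq:8}, extract $\|F\|^2$ by a Schwarz inequality, and conclude with Parseval. The only informal point, which you share with the paper's equally terse argument, is that applying Cauchy--Schwarz term by term in $\mb w=\bx_1-\bx_2$ strictly produces $\sum_{\mb w}|\Gamma(\mb w)h(\mb w)|$ rather than $\sum_{\mb w}\Gamma(\mb w)h(\mb w)$, so one should either note the sign of the covariance or bound the quadratic form by $\sup_{\mb k}\hat C(\mb k)\,\|F\|^2$ in Fourier variables; this does not affect the use made of the lemma.
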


\begin{proof}
  By Schwarz inequality 
\begin{equation*}
\Big\langle\Big( \ve^{d/2} \sum_{\bx} F(\ve \bx) \tau_{\bx}
      u^i_\lambda \Big)^2 \Big\rangle 
\leq\| F\|^2  \sum_\bx \left< u^i_\lambda \tau_{\bx}
          u^i_\lambda\right>,
\end{equation*}
where
  \begin{equation*}
    \begin{split}
     \sum_\bx \left< u^i_\lambda \tau_{\bx}
          u^i_\lambda\right> &= \beta^{-1} \sum_\bx
        \Gamma(\bx) \sum_i \sum_\bz g_\lambda^i({\mb z})
         g_\lambda^i({\mb z+ \mb x}) \\
       %\le &
 &=\beta^{-1}  \int d\mb k\; 
       \hat\Gamma(\mb k)\;  \left|\hat g_\lambda^i(\mb k)\right|^2
    \end{split}
  \end{equation*}
\end{proof}

Notice that if $d\ge 3$ or $\nu>0$ the quantity
$\int d\mb k\; 
       \hat\Gamma(\mb k)\;  \left|\hat g_\lambda^i(\mb k)\right|^2$ is bounded. 

In order to to prove that 
$L u^i_\lambda$ does not contribute in the limit we have just to show that
\begin{equation*}
\big\langle\Big(\ve^{-1}\int_0^t \ve^{d/2} \sum_{\bx} F(\ve \bx) \tau_{\bx}
     L u^i_\lambda (s/\ve^2) \Big)^2\big\rangle \to 0,\qquad \ve\to 0.
\end{equation*}
We can perform the time integration and we deduce
\begin{equation*}\begin{split}
&\ve^{-1}\int_0^t \ve^{d/2} \sum_{\bx} F(\ve \bx) \tau_{\bx}
     L u^i_\lambda (s/\ve^2) \\
&= \ve\, \ve^{d/2} \sum_{\bx} F(\ve \bx) \tau_{\bx}\big[u^i_\lambda(t/\ve^2)
-u^i_\lambda(0)  \big] + \ve \mathcal M_\ve (F,t),
\end{split}\end{equation*}
 where $\mathcal M_\ve(F,t) $ is a martingale given by a stochastic integral. By Lemma \ref{t:ula} the first term on the rhs of the previous equality is of order $\ve$. Moreover, $\mathcal M_\ve(F,t) $ has bounded quadratic variation, therefore the contribution of $\ve \mathcal M_\ve (F,t)$ vanishes as  $\ve\to 0$.

\begin{lemma}\label{t:ula2}
For every $F\in L^2(\bb R^d)$, $\forall i=1,..,d$
\begin{equation*}\begin{split}
\lim_{\lambda\to 0}\lim_{\ve\to 0}\big\langle \Big(\frac {\lambda} {\ve}\int_0^t ds\,\ve^{d/2}\sum_{\mb z}
F\big(\ve \mb y)
\tau_{\mb y} u^i_\lambda(s/\ve^2) \Big)^{2}\big\rangle =0
\end{split}\end{equation*}
\end{lemma}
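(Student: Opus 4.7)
The plan is to bound the variance of the time integral by a Kipnis--Varadhan type $H^{-1}$ norm, compute that norm explicitly in Fourier coordinates using the harmonic structure, and finally check that the prefactor $\lambda^2$ beats the remaining low-frequency singularity. Set $\Psi_\ve := \ve^{d/2}\sum_{\mb y} F(\ve \mb y)\,\tau_{\mb y}u^i_\lambda$. Since $L=-A+\gamma S$ with $A$ antisymmetric and $S$ symmetric with respect to the equilibrium Gaussian measure, the symmetric part of $L$ is $\gamma S$, and $\langle u^i_\lambda\rangle=0$ because $\langle \mb q_{\mb 0}\!\cdot\!\mb p_{\mb x}\rangle=0$. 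The standard variance bound for additive functionals of a stationary Markov process gives
\begin{equation*}
\left\langle\left(\int_0^{T}\Psi_\ve(u)\,du\right)^2\right\rangle
\le \frac{C\,T}{\gamma}\,\left\langle\Psi_\ve,(-S)^{-1}\Psi_\ve\right\rangle.
\end{equation*}
Changing variable $u=s/\ve^2$ and applying this with $T=t/\ve^2$ bounds the quantity in the lemma by $Ct\gamma^{-1}\lambda^2\,\langle\Psi_\ve,(-S)^{-1}\Psi_\ve\rangle$, so the claim reduces to $\lim_{\lambda\to 0}\lim_{\ve\to 0}\lambda^2\,\langle\Psi_\ve,(-S)^{-1}\Psi_\ve\rangle=0$.

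To evaluate the $H^{-1}$ norm I exploit that $S\mb p_{\mb x}=2\Delta \mb p_{\mb x}$. On the linear span of observables $\sum_{\mb x}h(\mb x)\,\mb q_{\mb 0}\!\cdot\!\mb p_{\mb x}$ with $\sum_{\mb x}h(\mb x)=0$, $-S$ therefore acts as $-2\Delta$ on the coefficient sequence $h$, and its inverse is the Fourier multiplier $1/\Phi(\mb k)$. Since $\hat g^i_\lambda(\mb 0)=0$ by the formula \eqref{hatG}, the function $u^i_\lambda$ lies in this subspace and
\begin{equation*}
(-S)^{-1}u^i_\lambda=\tilde u^i_\lambda := \sum_{\mb x}\tilde g^i_\lambda(\mb x)\,\mb q_{\mb 0}\!\cdot\!\mb p_{\mb x},\qquad \widehat{\tilde g}^i_\lambda(\mb k)=\frac{\hat g^i_\lambda(\mb k)}{\Phi(\mb k)}.
\end{equation*}
Repeating the Gaussian two-point computation in the proof of Lemma~\ref{t:ula} verbatim, but pairing $u^i_\lambda$ against $\tilde u^i_\lambda$, and passing to the limit $\ve\to 0$ via Parseval on $\Td$ and dominated convergence, one obtains
\begin{equation*}
\lim_{\ve\to 0}\left\langle\Psi_\ve,(-S)^{-1}\Psi_\ve\right\rangle
=\beta^{-1}\|F\|_{L^2}^2\int_{\Td}d\mb k\;\hat\Gamma(\mb k)\,\frac{4\sin^2(2\pi k_i)}{\Phi(\mb k)\left(\Phi(\mb k)+\lambda\right)^2}.
\end{equation*}

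It remains to check that $\lambda^2$ dominates this integral as $\lambda\to 0$. In the pinned case ($\nu>0$) the weight $\hat\Gamma$ is bounded on the torus and $\sin^2(2\pi k_i)/\Phi(\mb k)$ is bounded as well, so the integral is controlled by $\int d\mb k\,(\Phi(\mb k)+\lambda)^{-2}$. Since $\Phi(\mb k)\sim|\mb k|^2$ near the origin, rescaling $\mb k=\sqrt{\lambda}\mb k'$ gives order $\lambda^{(d-4)/2}$ for $d<4$, order $\log \lambda^{-1}$ for $d=4$, and $O(1)$ for $d\ge 5$; in every case $\lambda^2$ times this vanishes. In the unpinned case with $d\ge 3$ we have $\hat\Gamma(\mb k)\sim|\mb k|^{-2}$ near the origin, the integrand behaves like $|\mb k|^{-2}(|\mb k|^2+\lambda)^{-2}$, and the same rescaling yields order $\lambda^{(d-6)/2}$ for $3\le d<6$; multiplication by $\lambda^2$ produces $O(\lambda^{(d-2)/2})$, which vanishes for every $d>2$.

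The main technical obstacle is the identification of $(-S)^{-1}$ on the non-local quadratic observable $u^i_\lambda$, together with the uniform-in-$\ve$ passage to the limit in the $H^{-1}$ norm. The point is that $\hat g^i_\lambda/\Phi$ is strictly more singular at $\mb k=\mb 0$ than $\hat g^i_\lambda$ itself, so without the prefactor $\lambda^2$ the limit would diverge in the relevant dimensions; the whole argument rests on this cancellation between the low-frequency blow-up of $(-S)^{-1}$ and the vanishing coefficient $\lambda^2$ coming from the $\lambda u^i_\lambda$ term of the fluctuation-dissipation decomposition.
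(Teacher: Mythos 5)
Your argument is correct and reaches, after the Kipnis--Varadhan step, essentially the same quantity the paper arrives at: both proofs reduce the lemma to showing that $\lambda^2\int_{\T^d}\hat\Gamma(\mb k)\,(\Phi(\mb k)+\lambda)^{-2}\,d\mb k\to 0$ (your integrand carries the extra bounded factor $4\sin^2(2\pi k_i)/\Phi(\mb k)\le 2$), and the final $\lambda\to 0$ analysis by rescaling is the same. Where you diverge is in how the $H^{-1}$ norm is handled. The paper never inverts $S$: it writes $u^i_\lambda$ in divergence form with respect to the noise fields, $u^i_\lambda=\frac1{d-1}\sum_{\mb x}\sum_{k,j}G_\lambda(\mb x)\,q^j_{\mb 0}\bigl[X^{k,j}_{\mb x-\mb e_i}p^k_{\mb x-\mb e_i}+X^{k,j}_{\mb x+\mb e_i}p^k_{\mb x+\mb e_i}\bigr]$, integrates the $X$'s by parts onto the test function $f$ in the variational formula, and applies Cauchy--Schwarz against the Dirichlet form; this yields an upper bound on the $H^{-1}$ norm without ever constructing $(-S)^{-1}u^i_\lambda$. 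You instead compute the $H^{-1}$ norm exactly by diagonalizing $S$ on the sector $\sum_{\mb x}h(\mb x)\,\mb q_{\mb 0}\cdot\mb p_{\mb x}$, which is clean and gives a slightly sharper constant. The one point you should patch: your object $\tilde u^i_\lambda$ need not be an $L^2$ random variable in all cases covered by the lemma. Indeed $\widehat{\tilde g}^i_\lambda=\hat g^i_\lambda/\Phi\sim k_i/(\lambda|\mb k|^2)$ near $\mb k=\mb 0$, which fails to be square-summable for the pinned cases $d=1,2$ (the lemma is stated for $\nu>0$, all $d\ge 1$). The cure is cheap --- the bilinear pairing $\langle\Psi_\ve,(-S)^{-1}\Psi_\ve\rangle$ only involves $\hat g^i_\lambda\overline{\hat g^i_\lambda}/\Phi$, which is bounded near the origin, so you should either work directly with the variational characterization $\sup_f\{2\langle f\Psi_\ve\rangle-\langle f(-S)f\rangle\}$ restricted to this sector, or define $\tilde u^i_\lambda$ through a cutoff and pass to the limit in the quadratic form. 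This is precisely the technicality the paper's divergence-form route avoids automatically, which is presumably why it was chosen.
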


\begin{proof}

The following inequality holds:
\begin{equation*}\begin{split}
&\sup_{t\in[0,T]}\Big\langle \Big(\frac {\lambda} {\ve}\int_0^t ds\,\ve^{d/2}\sum_{\mb z}
F\big(\ve \mb y)
\tau_{\mb y} u^i_\lambda(s/\ve^2) \Big)^{2}\Big\rangle\\
&\leq C T\sup_{f}\left\{ 2\frac {\lambda}{ \ve}\,
\ve^{d/2}\sum_{\mb y}F(\ve\mb y)
\langle f\tau_{\mb y} u^i_\lambda \rangle - \frac 1 {\ve^2}\langle f (-Sf)\rangle\right\}.
\end{split}\end{equation*}

Observe that 
\begin{equation*}
u^i_\lambda=\frac 1 {d-1}\sum_{\mb x}\sum_{k,j}G_\lambda(\mb x)
q^j_{\mb 0}\big[X_{\mb x-\mb e_i}^{k,j}p^k_{\mb x-\mb e_i}+X_{\mb x+\mb e_i}^{k,j}p^k_{\mb x+\mb e_i}\big]
\end{equation*}
for $d\geq 2$,
where $G_\lambda(\mb x)$ solves $(\lambda-2\Delta)G_\lambda(\mb x)=\delta(\mb x)$. A similar formula holds for $d=1$.
 Then 
\begin{equation*}\begin{split}
&\frac {\lambda}{\ve}\,\ve^{d/2}\sum_{\mb y}F(\ve\mb y)
\langle f\tau_{\mb y} u^i_\lambda \rangle\\
& =
\frac 1{d-1}\frac {\lambda}{\ve}\,\ve^{d/2}\sum_{\mb y}F(\ve\mb y)
 \sum_{\mb x}\sum_{k,j}G_\lambda(\mb x-\mb y)\big\{
\big\langle \big(X_{\mb x-\mb e_i}^{k,j}f\big)\, q^j_{\mb y}p^k_{\mb x-\mb e_i}\big\rangle\\
& \qquad +
\big\langle \big(X_{\mb x-\mb e_i}^{k,j}f \big)\,q^j_{\mb y}p^k_{\mb x-\mb e_i}\big\rangle
\big\}
\end{split}\end{equation*}
which is bounded in absolute value by 
\begin{equation*}\begin{split}
C\left(\lambda^2\beta^{-1}\sum_{\mb x}\sum_j
\big\langle\Big(\ve^{d/2} \sum_{\mb y}F(\ve\mb y)
G_\lambda(\mb x-\mb y)q_{\mb y}^j  \Big)^2\big\rangle
\right)^{1/2}\\
\times \Big(\ve^{-2}\langle f(-S\,f)\rangle\Big)^{1/2}.\\
\end{split}\end{equation*}
Observe that
\begin{equation*}\begin{split}
&\lambda^2\sum_{\mb x}\sum_j
\big\langle\Big(\ve^{d/2} \sum_{\mb y}
F(\ve\mb y)
G_\lambda(\mb x-\mb y)q_{\mb y}^j  \Big)^2\big\rangle\\
&=\lambda^2\ve^d\sum_{\mb y, \mb z}F(\ve\mb y)
F\big(\ve(\mb y+\mb z)\big)\sum_{\mb x}G_\lambda(\mb x)G_\lambda(\mb x-\mb z)\langle\mb q_{\mb z}\cdot \mb q_{\mb 0} \rangle\\
&\leq \lambda^2 \|F\|^2\sum_{\mb x, \mb z}G_\lambda(\mb x)G_\lambda(\mb x-\mb z)\Gamma(\mb z),
\end{split}\end{equation*}
where in he last step we used Schwarz inequality. In the Fourier space
\begin{equation*}
\lambda^2 \sum_{\mb x, \mb z}G_\lambda(\mb x)G_\lambda(\mb x-\mb z)\Gamma(\mb z)=
\lambda^2 \int_{\bb T^d} d\mb k \frac {\hat\Gamma (\mb k)}{\big(\lambda+\Phi(\mb k)\big)^2},
\end{equation*}
which vanishes as $\lambda\to 0$ for $\nu>0$ or $d\geq 3$.
Therefore 
\begin{equation*}\begin{split}
&\sup_{f}\left\{ 2\frac {\lambda}{ \ve}\,
\ve^{d/2}\sum_{\mb y}F(\ve\mb y)
\langle f\tau_{\mb y}  u^i_\lambda \rangle - \frac 1 {\ve^2}\langle f (-Sf)\rangle\right\}\\
&\leq  C_0\big\| F\big\|^2 \lambda^2 \sum_{\mb x, \mb z}G_\lambda(\mb x)G_\lambda(\mb x-\mb z)\Gamma(\mb z)
\end{split}\end{equation*}
which vanishes for $\lambda\to 0$. 

\end{proof}

Observe that 
\begin{equation*}\begin{split}
\frac{\alpha}{\gamma} A u^i_\lambda=& \frac{\alpha}{\gamma}
\sum_{\mb x} g_\lambda^i ({\mb x}) 
\Big\{\mb p_{\mb 0}\cdot \mb p_{\mb x} +
\mb q_{\mb 0} \cdot (\alpha\Delta-\nu I)\mb q_{\mb x}\Big\}.
\end{split}\end{equation*}
So we have to look at
\begin{equation*}\begin{split}
&\frac{\alpha}{\gamma}\ve^{d-1} \sum_i\sum_{\mb y,\mb z}
\nabla^{\ve}_{\mb e_i}F\big(\ve (\mb 
y + \mb z)\big) H(\ve \by) \tau_\bz A u^i_\lambda\\ 
&= \frac{\alpha}{\gamma} \ve^{d-1}\sum_i \sum_{\mb y,\mb z}
\nabla^{\ve}_{\mb e_i}F\big(\ve (\mb y + \mb z)\big) H(\ve \by) 
\sum_{\mb x} g_\lambda^i({\mb x})
\Big[\mb p_{\mb z}\cdot \mb p_{\mb x+ \mb z} +
\mb q_{\mb z}\cdot (\alpha\Delta-\nu I)\mb q_{\mb x+ \mb z}\Big] \\
&= \frac{\alpha}{\gamma} \ve^{d-1} \sum_i \sum_{\mb y,\mb z}
\nabla^{\ve}_{\mb e_i}F\big(\ve (\mb y + \mb z)\big) H(\ve \by) 
\sum_{\mb x} g_\lambda^i({\mb x -\mb z})
\Big[\mb p_{\mb z}\cdot \mb p_{\mb x} +
\mb q_{\mb z}\cdot (\alpha\Delta-\nu I)\mb q_{\mb x}\Big] 
\end{split}
\end{equation*}
We are left to study:
\begin{equation*}\begin{split}
K_2 =& \sum_i \frac{\alpha}{\gamma}\ve^d\sum_{\mb y,\mb z}H(\ve\mb
y)\nabla^\ve_i F\big(\ve(\mb y+\mb z)\big)\sum_{\mb x}g^i_\lambda(\mb
x)\\ 
&\times\frac 1 {\ve}\int_0^t ds\,\langle\tau_{\mb z}[\mb p_{\mb 0}\cdot\mb p_{\mb x}+\mb q_{\mb 0}\cdot\big(\alpha\Delta-\nu I  \big)\mb q_{\mb x}] (s/\ve^2)\big(e_{\mb 0}(0)-\beta^{-1}\big)\rangle.
\end{split}\end{equation*}

Remark that 
\begin{equation*}
  \begin{split}
    A\,\frac 1 2 \big[\mb p_{\mb 0}\cdot \mb q_{\mb x}+ \mb q_{\mb
      0}\cdot \mb p_{\mb x}\big]=&\frac 1 2 \big[\mb q_{\mb
      0}\cdot\big(\alpha\Delta-\nu I \big)\mb q_{\mb x}+
    (\alpha\Delta-\nu I  \big)\mb q_{\mb 0}\cdot \mb q_{\mb x}\big]
    +\mb p_{\mb 0}\cdot\mb p_{\mb x},
  \end{split}
\end{equation*}
that implies
\begin{equation*}
  \begin{split}
    &\mb p_{\mb 0}\cdot\mb p_{\mb x}+\mb q_{\mb 0}
    \cdot\big(\alpha\Delta-\nu I  \big)\mb q_{\mb x} \\
    &= \frac 1 2 \big[\mb q_{\mb
      0}\cdot\big(\alpha\Delta-\nu I \big)\mb q_{\mb x} -
    (\alpha\Delta-\nu I  \big)\mb q_{\mb 0}\cdot \mb q_{\mb x}\big] +
   \frac 1 2 A \big[\mb p_{\mb 0}\cdot \mb q_{\mb x}+ \mb q_{\mb
      0}\cdot \mb p_{\mb x}\big]  \\
    &=  \frac 1 2 \sum_j \nabla_{\mb e_j} \left[ \mb q_{\mb x}\cdot \mb
      q_{-\mb e_j} - \mb q_{\mb 0}\cdot \mb q_{\mb x-\mb e_j} \right]
    + \frac 12 (L-\gamma S) \big[\mb p_{\mb 0}\cdot \mb q_{\mb x}+ \mb q_{\mb
      0}\cdot \mb p_{\mb x}\big] 
  \end{split}
\end{equation*}
The last term on the right hand side gives a negligible contribution
as $\ve \to 0$.
The first term is a gradient and by summation by part gives 
\begin{equation}
  \label{eq:6}
  \begin{split}
    \frac{\alpha}{\gamma}\ve^d\sum_{i,j} \sum_{\mb y,\mb z} & H(\ve\mb
    y)\nabla^{\ve,*}_j \nabla^\ve_i F\big(\ve(\mb y+\mb z)\big)
    \sum_{\mb x}g^i_\lambda(\mb x)\\
    & \int_0^t ds\,\langle\tau_{\mb z}[\mb q_{\mb x}\cdot \mb q_{-\mb
      e_j} - \mb q_{\mb 0}\cdot \mb q_{\mb x-\mb e_j}]
    (s/\ve^2)\big(e_{\mb 0}(0)-\beta^{-1}\big)\rangle \\
    = \frac{\alpha}{\gamma}\ve^d\sum_{i,j} \sum_{\mb y,\mb z} & H(\ve\mb
    y)\nabla^{\ve,*}_j \nabla^\ve_i F\big(\ve\mb z\big)
    \sum_{\mb x}g^i_\lambda(\mb x)\\
    & \int_0^t ds\,\langle\tau_{\mb z}[\mb q_{\mb x}\cdot \mb q_{-\mb
      e_j} - \mb q_{\mb 0}\cdot \mb q_{\mb x-\mb e_j}]
    (s/\ve^2)\big(e_{\mb 0}(\mb y)-\beta^{-1}\big)\rangle
  \end{split}
\end{equation}

Observe that 
\begin{equation}\begin{split}
&\sum_{\mb z} G(\ve\mb z)\sum_{\mb x}g^j_\lambda(\mb x)
\tau_{\mb z}[\mb q_{\mb x}\cdot \mb q_{-\mb
      e_i} - \mb q_{\mb 0}\cdot \mb q_{\mb x-\mb e_i}]\\
&= \sum_\bz G(\ve\mb z)\sum_{\mb x}\big[g^j_\lambda(\mb x-\mb
e_i)-g^j_\lambda(\mb x+\mb e_i)\big] 
\tau_{\mb z}[\mb q_{\mb x}\cdot \mb q_{\mb 0}](1+\mc O(\ve)).
\end{split}\end{equation}
We denote by $c^{i,j}_\lambda(\mb x):=g^j_\lambda(\mb x-\mb e_i)-g^j_\lambda(\mb x+\mb e_i)$. 
Let $f(\mb z)$ be the function satisfiying
\begin{equation*}
(\alpha\Delta-\nu I)f(\mb z)=\delta(\mb z).
\end{equation*}
By direct computation
\begin{equation}\begin{split}
\sum_{\mb x}c^{i,j}_\lambda(\mb x)\mb q_{\mb x}\cdot \mb q_{\mb 0}= &
\big(L-\gamma S\big)\big[\mb q_{\mb 0}\cdot\sum_{\mb z}\sum_{\mb x}c^{i,j}_\lambda(\mb x)f(\mb z-\mb x) \mb p_{\mb z}\big]\\
& -\mb p_{\mb 0}\cdot\sum_{\mb z}\sum_{\mb x}c^{i,j}_\lambda(\mb x)f(\mb z-\mb x) \mb p_{\mb z}.
\end{split}\end{equation}
The function $\tilde \kappa^{i,j}_\lambda(\mb z):=-\sum_{\mb x}c^{i,j}_\lambda(\mb x)f(\mb z-\mb x)$
decay exponentially fast, since the Fourier transfom is given by
\begin{equation*}
\frac{4\sin(2\pi k_i)\sin(2\pi k_j)}{\omega^2(\mb k)}\frac 1 
{\Phi(\mb k)+\lambda}.
\end{equation*}
Therefore one can show that the the term $(L-\gamma S)(\cdot)$ gives a contribution of order $\ve$ and \eqref{eq:6} is equal to
\begin{equation}\label{cond2}\begin{split}
& \frac 1 {8\pi^2\,\gamma}\ve^d \sum_{\mb y, \mb z}\sum_{j,\ell} H(\ve \mb y)
\nabla_j^{\ve,*}\nabla_\ell^\ve F\big(\ve(\mb y+\mb z)\big)
\int_0^t ds\;\sum_{\mb x}\kappa_\lambda^{j,\ell}(\mb x) \\
&\;\;\times
\langle\big[\mb p_{\mb z}\cdot\mb p_{\mb x+\mb z}\big](s/\ve^2)\big(e_{\mb 0}(0)-\beta^{-1}\big) \rangle\;+\mathcal{O}(\ve),
\end{split}\end{equation}
where 
$$
\kappa^{j,\ell}_\lambda(\mb x)=\int_{\bb T^d} d\xi\;
\frac{\partial_j\omega^2(\xi)\partial_\ell
  \omega^2(\xi)}{\omega^2(\xi)} \frac 1 
{\Phi(\xi)+\lambda}\,e^{2i\pi \mb x\cdot \xi },\qquad \ell, i=1,..,d.
$$

% By Lemma \ref{lemma:repl} 
Consequently 
the quantity \eqref{cond2} is asymptotic  to
\begin{equation}
\label{cond3}
\begin{split}
& \frac 1 {8\pi^2\,\gamma}\ve^d \sum_{\mb y, \mb z}\sum_{j,\ell} H(\ve \mb y)
\nabla_j^{\ve,*}\nabla_\ell^\ve F\big(\ve(\mb y+\mb z)\big)\\
&\;\;\times\int_0^t ds\; \kappa_\lambda^{j,\ell}(\mb 0) 
\langle e_{\mb z}(s/\ve^2)\big(e_{\mb 0}(0)-\beta^{-1}\big) \rangle\\
& +\frac 1 {8\pi^2\,\gamma}\ve^d \sum_{\mb y, \mb z}\sum_{j,\ell} H(\ve \mb y)
\nabla_j^{\ve,*}\nabla_\ell^\ve F\big(\ve(\mb y+\mb z)\big)\\
&\;\;\times\int_0^t ds\;\sum_{\mb x\neq \mb 0} \kappa_\lambda^{j,\ell}(\mb x) 
\langle\big[ \mb p_{\mb z}\cdot\mb p_{\mb x+\mb z}\big](s/\ve^2)\big(e_{\mb 0}(0)-\beta^{-1}\big) \rangle.
\end{split}\end{equation}
We observe that $\kappa_\lambda^{j,\ell}(\mb 0)=0$ if $j\neq \ell$ and $\kappa_\lambda^{j,j}(\mb 0)=
\kappa_\lambda^{1,1}(\mb 0)$ $\forall j=1,..,d$. We set
\begin{equation}\label{def:tk}
\kappa_\lambda:=\kappa_\lambda^{1,1}(\mb 0)=\int_{\bb T^d}d\xi \,
\Big(\frac{\partial_1 \omega^2(\xi)}{\omega(\xi)}\Big)^2\frac 1
{\Phi(\xi)+\lambda}. 
\end{equation}
Then \eqref{cond3} is equal to
\begin{equation}
\begin{split}
& \frac 1 {8\pi^2\,\gamma}\ve^d \sum_{\mb y, \mb y'}H(\ve \mb y)
\kappa_\lambda\Delta F(\ve\mb y')\int_0^t ds\;  
\langle e_{\mb y'}(s/\ve^2)\big(e_{\mb y}(0)-\beta^{-1}\big) \rangle\\
& +\frac 1 {8\pi^2\,\gamma}\ve^d \sum_{\mb y, \mb z}\sum_{j,\ell} H(\ve \mb y)
\nabla_j^{\ve,*}\nabla_\ell^\ve F\big(\ve(\mb y+\mb z)\big)\\
&\;\;\times\int_0^t ds\;\sum_{\mb x\neq \mb 0} \kappa_\lambda^{j,\ell}(\mb x) 
\langle\big[ \mb p_{\mb z}\cdot\mb p_{\mb x+\mb z}\big](s/\ve^2)\big(e_{\mb 0}(0)-\beta^{-1}\big) \rangle.
\end{split}
\end{equation}
We will prove in the next section that the second term vanishes as $\ve\to 0$, $\lambda\to 0$.

%%%%%%%%%%%%%%%%%%%%%%%%%%%%%%%%%%%%%%%%%%%%%%%%%%%%%%%%%%%%%%%%%%%%%%%%%%%%%%%%%%%%%%%%%%%%%%%%
\section{Boltzmann-Gibbs principle}\label{sec:BG}

By using as above the Schwarz inequality, all we need to prove is that
\begin{equation}\label{eq:BG0} 
\sum_{\mb x \neq \mb 0}|\kappa^{i,\ell}_\lambda(\bx)
|{\left<\left(\int_0^t ds\,\ve^{d/2}
\sum_\by  F(\ve \by) \mb p_{\by} \cdot \mb p_{\by+\bx}(s/\ve^2)  \right)^2\right>}^{1/2}.
\end{equation}
is negligeable as $\ve\to 0$.

We denote the cube in  $\Z^d$ of size $2\ell+1$ by
$\Lambda_\ell:=\{\mb z\in\Z^d:\, |z^j|\leq \ell, j=1,\dots,d\}$, and
for every $x\in\Z$ we define 
  $\Psi_{\ell,x}:=\frac 1 {|\Lambda_\ell|}\sum_{\mb z\in \Lambda_\ell}
  \mb p_\bz \cdot\mb p_{\bz+\bx}$.
We observe that in \eqref{eq:BG0} we can replace $\bp_{\by}\cdot
\bp_{\by+\bx}$ with $\tau_\by \Psi_{\ell,\bx}$, with a difference   $\sim \ve |\Lambda_\ell|\, \|\kappa_\lambda\|_{\ell^1}$.
We denote by $\langle\cdot\rangle_{\Lambda_K}=\langle
\cdot\rangle_{\Lambda_K,\, \mathcal T_K,  
\mathcal P_K}$ the micro-canonical expectation in the box $\Lambda_K$,
with $\mathcal T_K=\sum_{\bx\in \Lambda_K} \bp_\bx^2$ and $\mathcal
P_K=\sum_{\bx\in\Lambda_K} \bp_\bx$. 
We define
$\widetilde\Psi_{\ell, \bx}:= \Psi_{\ell,\bx}-\langle \Psi_{\ell,\bx}
\rangle_{\Lambda_{\ell+|\bx|}}$.  
Then
\begin{equation}\label{eq:var}
\begin{split}
&\left<\left( \int_0^t ds\, \ve^{d/2}
\sum_\by  F(\ve \by) \tau_\by\widetilde \Psi_{\ell,\bx}(s/\ve^2) 
 \right)^2\right>\\
&
\leq C t \sup_{f}\left\{
\ve^{d/2} \sum_\by  F(\ve \by)\left< f\,\tau_y\widetilde\Psi_{\ell,\bx}\right>
-  \ve^{-2}\left< f\,(-S f) \right>\right\}.
\end{split}
\end{equation}
Introduce $S_{\Lambda_K}= \displaystyle\frac  1{4(d-1)} \sum_{\bx, \bz \in \Lambda_K \atop \|\bx - \bz\|=1}\sum_{i,j}
   \left( X^{i,j}_{\bx, \bz}\right)^2$. 
By the spectral gap of $S_K$, 
there exists $\tilde \Psi_{\ell,\bx}= S_{\Lambda_{\ell+|\bx|}}U_{\ell,
  \bx}$, $\forall \bx$, $\forall \ell$. Moreover, since the spectral
gap of $S_{\Lambda_K} $ is bounded below by $CK^{-2}$, we have that 
$\langle U_{\ell,\bx} \widetilde\Psi_{\ell,\bx}\rangle^2 \leq
C(\ell+|\bx|)^2\langle \widetilde\Psi_{\ell,\bx}^2\rangle \leq
C(\ell+|\bx|)^2 \beta^{-1} \ell^{-d}$. 

Then since
\begin{equation*}\begin{split}
\langle f\,\tau_y\tilde\Psi_{\ell,x}\rangle \leq \langle U_{\ell, x}
\Psi_{\ell,x}\rangle^{1/2} 
\langle \tau_\by f\,(-S_{\Lambda_{\ell+|\bx|}} \tau_\by f) \rangle \big)^{1/2}.
\end{split}\end{equation*}
 we can bound the right hand side of \eqref{eq:var} by
\begin{equation*}
  \begin{split}
    C t \sum_\by \sup_{f}\left\{
\ve^{d/2}  F(\ve \by)\left< f\,\tau_y\widetilde\Psi_{\ell,\bx}\right>
-  \ve^{-2}|\Lambda_{\ell+|\bx|}|^{-1}\left< \tau_\by f\,(-S_{\Lambda_{\ell+|\bx|}} \tau_\by f)
\right>\right\}\\
\le 
 C t \sum_\by \sup_{f}\Big\{
\ve^{d/2}  F(\ve \by) C(\ell+|\bx|)\langle
\widetilde\Psi_{\ell,\bx}^2\rangle^{1/2} \langle
 \tau_\by f\,(-S_{\Lambda_{\ell+|\bx|}} \tau_\by f) \rangle \big)^{1/2} \\
-  \ve^{-2}|\Lambda_{\ell+|\bx|}|^{-1}\left< \tau_\by f\,(-S_{\Lambda_{\ell+|\bx|}} \tau_\by f)
\right>\Big\}\\
\le
 C' t \ve^{d+2} \sum_\by  F(\ve \by)^2 (\ell+|\bx|)^{d+2} \langle
\widetilde\Psi_{\ell,\bx}^2\rangle.
  \end{split}
\end{equation*}
Since $\sum_\bx \kappa^{i,\ell}_\lambda(\bx) (\ell+|\bx|)^{(d+2)/2} <\infty$ 
 we
conclude that 
\begin{equation*}
 \lim_{\ve\to 0} \sum_{\mb x \neq \mb 0}|\kappa^{i,\ell}_\lambda(\bx)
|{\left<\left(\int_0^t ds\,\ve^{d/2}
\sum_\by  F(\ve \by) \tau_\by \widetilde\Psi_{\ell,\bx}
(s/\ve^2)  \right)^2\right>}^{1/2} =0.
\end{equation*}

Setting $\bar\Psi_{\ell,\bx}:=\langle \Psi_{\ell,\bx} \rangle_{\Lambda_{\ell+|\bx|}}$,
now we have just to show that 
\begin{equation*}\begin{split}
\lim_{\ve\to 0}\sum_{\bx\neq
  0}|\kappa_\lambda(\bx)|{\left<\left(\int_0^t ds\,
\ve^{d/2}
\sum_\by F(\ve \by) \tau_\by\bar\Psi_{\ell,\bx}(s/\ve^2)  \right)^2\right>}^{1/2}=0.
\end{split}\end{equation*}

The previous expression is bounded by
\begin{equation*}
t\,\sum_{\bx\neq 0}|\kappa_\lambda(\bx)|{\left<\left(
\ve^{d/2} \sum_\by  F(\ve \by)\tau_\by \bar\Psi_{\ell,\bx}
\right)^2\right>}^{1/2}.
\end{equation*}
We observe that
\begin{equation*}\begin{split}
&\left<\Big(
\ve^{d/2}\sum_\by  F(\ve \by)\tau_\by \bar\Psi_{\ell,\bx}
\Big)^2\right>\\
&  \leq \ve^d \sum_{\by,\,\by'} F(\ve \by)F(\ve \by')
\langle \tau_{\by-\by'} \bar\Psi_{\ell,\bx}\,\bar\Psi_{\ell,\bx}\rangle\\
&\leq \frac{\ve^d}2 \sum_{\by,\,\by'}\big( F(\ve \by)^2+F(\ve \by')^2\big)
\langle \tau_{\by- \by'} \bar\Psi_{\ell,\bx}\,\bar\Psi_{\ell,\bx}\rangle\\
&\leq \| F\|^2_{L^2(\R)}\sum_{|\bz|\leq 2(\ell+|\bx|)}
\langle \tau_{z} \bar\Psi_{\ell,x}\,\bar\Psi_{\ell,x}\rangle\\
& \leq C \| F\|^2_{L^2(\R)}(\ell+|\bx|)^d\langle \bar\Psi_{\ell,\bx}^2  \rangle.
\end{split}\end{equation*}
By the properties of the microcanonical measure it holds
$\langle\bar\Psi_{\ell,x}^2\rangle\leq C_0 \beta^{-2}(\ell+|\bx|)^{-2d}$,
and therefore
\begin{equation*}
\begin{split}
&
t\,\sum_{x,\,x\neq 0}|\kappa_\lambda(x)|{\big\langle\Big(
\ve^{d/2}\sum_y  F'(\ve y)\tau_y \bar\Psi_{\ell,x}
\Big)^2\big\rangle}^{1/2}\\
&\leq C_1 T \beta^{-1}\| F'\|_{L^2(\R)} \|\kappa_\lambda\|_{\ell^1}\frac 1 { \ell^{d/2}},
\end{split}
\end{equation*}
which vanishes as $\ell\to\infty$.

\section{Appendix A}
\label{sec:appendixA}

Let $\phi(\bx,\by)$ a local function on $\Z^d\times \Z^d$ and define 
\begin{equation*}
  \Phi = \sum_{\bx,\by} \phi(\bx,\by) \bp_\bx \cdot \bq_\by
\end{equation*}
Consider $F\in L^2(R^d)$.
We want to find a good upper bound for the variance:
\begin{equation}\label{varA}
  \left<\left(\int_0^t ds \; \ve^{d/2} \sum_{\by} F(\ve \by) \tau_y S\Phi
      (s/\ve^2)\right)^2\right> 
\end{equation}
By lemma (2.4) in \cite{klo}, pag. 48, we have
\begin{equation*}
  \begin{split}
    &\left<\left(\sup_{0\le s\le T} \int_0^t ds \; \ve^{d/2} \sum_{\by}
        F(\ve \by) \tau_y S\Phi (s/\ve^2)\right)^2\right> \\
    &\le 24 T
    \ve^2 \left< \left(\ve^{d/2} \sum_{\by} F(\ve \by) \tau_y
        \Phi\right) (-S) \left(\ve^{d/2} \sum_{\by} F(\ve \by) \tau_y
        \Phi) \right)\right>\\
    &= \frac{48 T}{\beta} \ve^{d+2} \sum_{\by, \by'} F(\ve \by)  F(\ve
    \by')  \Xi(\by-\by') 
  \end{split}
\end{equation*}
where 
\begin{equation}
  \label{eq:1}
  \Xi(\by) = \sum_{\bx, \bx', \bz} \phi(\bx, \bx') (\Delta_1
  \phi)(\bx + \by, \bz) \Gamma(\by + \bx' - \bz)
\end{equation}
where $\Delta_1 \phi$ indicates the discrete laplacian on the first
variable of $\phi$.

Then by Schwarz inequality we can bound \eqref{varA} by 
\begin{equation}
  \label{eq:2}
  \|F\|^2 \ve^2 \sum_{\by} |\Xi(\by) | .
\end{equation}

\section{formulaire}
\label{sec:formulaire}

Some formulas we use:

\begin{equation}
  \label{eq:11}
  S \bp_\bx  = 2 \Delta \bp_\bx  \qquad d\ge 2
\end{equation}
\begin{equation}
  \label{eq:12}
  S p_x = \frac 16 \Delta \left( 4 p_x + p_{x+1} + p_{x-1} \right)
  \qquad d=1
\end{equation}
\begin{equation}
  \label{eq:14}
   S e_\bx = S \bp_\bx^2/2  =  \Delta \bp_\bx  \qquad d\ge 2
\end{equation}
\begin{equation}
  \label{eq:13}
   S e_x =  S p_x^2/2 = \frac 16 \Delta \left( 4 p_x^2 + p_{x+1}^2 +
     p_{x-1}^2 \right)  \qquad d=1
\end{equation}

\end{document}